\newenvironment{nalign}{
    \begin{equation}
    \begin{aligned}
}{
    \end{aligned}
    \end{equation}
    \ignorespacesafterend
}
\newenvironment{HP}[1][H]{%
   \begin{algorithm}[#1]%
  }{\end{algorithm}}
\newcommand{\removelatexerror}{\let\@latex@error\@gobble}
\newcommand{\dl}{\text{\upshape\textsf{d{\kern-0.03em}L}}\xspace}
\newcommand{\keyx}{KeYmaera\,X\xspace}
\newcommand{\dlold}{\text{\upshape\textsf{d{\kern-0.05em}$\mathcal{L}$}}\xspace}
\newcommand{\stateset}{\ensuremath{\mathscr{S}}\xspace}
\newcommand{\varset}{\ensuremath{\mathscr{V}}\xspace}
\newcommand{\real}{\ensuremath{\mathbb{R}}\xspace}
\newcommand{\naturalnum}{\ensuremath{\mathbb{N}}\xspace}
\newcommand{\rational}{\ensuremath{\mathbb{Q}}\xspace}
\newcommand{\lsb}{\ensuremath{\llbracket}} 
\newcommand{\rsb}{\ensuremath{\rrbracket}}
\newcommand{\sbb}[1]{\lsb #1 \rsb}
\newcommand{\lsp}{\ensuremath{\llparenthesis}}
\newcommand{\rsp}{\ensuremath{\rrparenthesis}}
\newcommand{\spp}[1]{\lsp\mkern1mu #1 \mkern1mu\rsp}
\newcommand{\lspn}{\ensuremath{(}}
\newcommand{\rspn}{\ensuremath{)}}
\newcommand{\sppn}[1]{\lspn #1 \rspn}
\newcommand{\dlsb}[1]{\spp{#1}}
\newcommand{\hpsb}[1]{\sbb{#1}}
\newcommand{\ssb}[1]{\sppn{#1}}
\newcommand{\Tau}{T} 
\newcommand{\aMinN}{\ensuremath{a_n^{min}}\xspace}
\newcommand{\aMaxN}{\ensuremath{a_n^{max}}\xspace}
\newcommand{\aMinS}{\ensuremath{a_s^{min}}\xspace}
\newcommand{\opif}{\ensuremath{\operatorname{if}\xspace}}
\newcommand{\opthen}{\ensuremath{\operatorname{then}\xspace}}
\newcommand{\opfi}{\ensuremath{\operatorname{fi}\xspace}}
\newcommand{\safe}{\ensuremath{\mathit{ok}}\xspace} 
\newcommand{\assume}{\ensuremath{\mathit{init}}\xspace}
\newcommand{\guarantee}{\ensuremath{\mathit{guarantee}}\xspace}
\newcommand{\ego}{ego-vehicle\xspace}
\newcommand{\sys}{\ensuremath{\mathit{aux}}\xspace}
\newcommand{\env}{\ensuremath{\mathit{env}}\xspace}
\newcommand{\ctrl}{\ensuremath{\mathit{ctrl}}\xspace}
\newcommand{\plant}{\ensuremath{\mathit{plant}}\xspace}
\newcommand{\nomreq}{\emph{req}\xspace}
\newcommand{\run}{execution}
\newcommand{\looprule}{\texttt{loop} rule}
\newcommand{\thm}{Theorem}
\newcommand{\fig}{\figurename}
\newcommand{\rhoone}{\uprho^{}_{1}}
\newcommand{\rhotwo}{\uprho^{}_{2}}
\newcommand{\psitwo}{\uppsi^{}_{2}}
\newcommand{\chione}{\upchi^{}_{1}}
\newcommand{\chitwo}{\upchi^{}_{2}}
\renewcommand\bibsection%
\title{On How to Not Prove Faulty Controllers Safe in Differential Dynamic Logic%
\thanks{This work was supported by FFI, VINNOVA under grant number 2017-05519, \textit{Automatically Assessing Correctness of Autonomous Vehicles -- Auto-CAV}, and by the Wallenberg AI, Autonomous Systems and Software Program (WASP) funded by the Knut and Alice Wallenberg Foundation.}}
\titlerunning{On How to Not Prove Faulty Controllers Safe}
\author{%
Yuvaraj~Selvaraj\inst{1,3}\orcidID{0000-0003-2184-3069}
\and
Jonas~Krook\inst{1,3}\orcidID{0000-0002-9810-4697}
\and
Wolfgang~Ahrendt\inst{2}\orcidID{0000-0002-5671-2555}
\and
Martin~Fabian\inst{1}\orcidID{0000-0003-1287-9748}
}
\authorrunning{Y. Selvaraj et al.}
\institute{%
Department of Electrical Engineering, Chalmers University of Technology, Göteborg, Sweden\\ \email{\{yuvaraj,krookj,fabian\}@chalmers.se}
\and
Department of Computer Science and Engineering, Chalmers University of Technology, Göteborg, Sweden\\
\email{ahrendt@chalmers.se}
\and
Zenseact, Göteborg, Sweden\\
\email{\{firstname.lastname\}@zenseact.com}
}
\begin{document}

\maketitle
\begin{abstract}
Cyber-physical systems are often safety-critical and their correctness is crucial, as in the case of automated driving. Using formal mathematical methods is one way to guarantee correctness. Though these methods have shown their usefulness, care must be taken as modeling errors might result in proving a faulty controller safe, which is potentially catastrophic in practice. This paper deals with two such modeling errors in \emph{differential dynamic logic}. Differential dynamic logic is a formal specification and verification language for \emph{hybrid systems}, which are mathematical models of cyber-physical systems. The main contribution is to prove conditions that when fulfilled, these two modeling errors cannot cause a faulty controller to be proven safe. The problems are illustrated with a real world example of a safety controller for automated driving, and it is shown that the formulated conditions have the intended effect both for a faulty and a correct controller. It is also shown how the formulated conditions aid in finding a \emph{loop invariant} candidate to prove properties of hybrid systems with feedback loops. The results are proven using the interactive theorem prover KeYmaera~X.

\keywords{Hybrid Systems \and Automated Driving \and Formal Verification \and Loop Invariant \and Theorem Proving }
\end{abstract}

\section{Introduction}\label{sec:intro}

\emph{Cyber-physical systems} (CPS) typically consist of a digital \emph{controller} that interacts with a physical dynamic system and are often employed to solve safety-critical tasks. For example, an automated driving system (ADS) has to control an autonomous vehicle (AV) to safely stop for stop signs, avoid collisions, etc. It is thus paramount that CPS work correctly with respect to their requirements. One way to ensure correctness of CPS is to use formal verification, which requires a formal model of the CPS. An increasingly popular family of models of CPS are \emph{hybrid} systems, which are mathematical models that combine discrete and continuous dynamics. 

To reason about the correctness of a CPS, hybrid systems can model different components of the CPS and their interactions, thus capturing the overall \emph{closed-loop} behavior. In general, hybrid systems that model real world CPS may involve three main components: a \emph{plant} model that describes the physical characteristics of the system, a controller model that describes the control software, and an \emph{environment} model that captures the behaviors of the surrounding world in which the controller operates, thereby defining the \emph{operational domain}. The goal for the controller is to choose control actions such that the requirements are fulfilled for \emph{all} possible behaviors of the hybrid system. 

Typically, the environment is modeled using nondeterminism to capture all possible behaviors. However, assumptions on the environment behavior are necessary to limit the operational domain and remove behaviors that are too hostile for any controller to act in a safe manner. For example, if obstacles are assumed to appear directly in front of an AV when driving, no controller can guarantee safety. While the assumptions in the formal models are necessary to make the verification tractable, there are subtle ways in which formal verification can provide less assurance than what is assumed~\cite{koopman2019credible}. In other words, as a result of the verification, the designer may conclude the controller to be safe in the entire assumed operational domain, whereas in reality some critical behaviors where the controller is actually at fault might be excluded from the verification. One possible cause for such a disparity between what is verified and what is assumed to be verified is the presence of modeling errors. In such cases, if a controller is verified to be safe, it leads to unsafe conclusions which might be catastrophic in practice.    
This paper deals with two such modeling errors by making them subject to interactive verification. In the first erroneous case, the environment assumptions and the controller actions interact in such a way that the environment behaves in a friendly way to adapt to 
the actions of a controller that exploits its friendliness. 
Then, it is possible that a faulty controller can be proven safe since the environment reacts to accommodate the bad control actions. 
An example of this is a faulty ADS controller that never brakes, together with an environment that reacts by always moving obstacles to allow the controller not to brake. 

In the second erroneous case, the assumptions about the environment and/or other CPS components remove all behaviors in which any action by the controller is needed. In this case, the assumptions over-constrain the allowed behaviors. For example, if the assumptions restrict the behavior of the AV to an extent that only braking is possible, then a faulty ADS controller can be proven safe because nothing is proven about the properties of the controller. In the worst-case, the assumptions remove all \emph{possible} behaviors, thereby making the requirement vacuously true. 

In both cases, a faulty controller can be proven safe with respect to the requirements for the wrong reasons, i.e., unintended modeling errors, thus resulting in potentially catastrophic operation of the CPS in practice. Modeling errors are in general hard to address because every model is an abstraction and there exists no ubiquitous notion of what a \emph{correct model} means. Therefore, a systematic way to identify and avoid modeling errors is highly desirable as it reduces the risk of unsound conclusions when a model is formally proven safe with respect to the requirements. Typically, the requirements specify (un)desired behavior of the closed-loop system within the operational domain and are expressed in some logical formalism to apply formal verification. \emph{Differential dynamic logic}~(\dl)~\cite{platzer2012logics,platzer2018logical} is a specification and verification language that can be used to formally describe and verify hybrid systems. The interactive theorem prover \keyx~\cite{fulton2015keymaera} implements a sound proof calculus~\cite{platzer2012logics,platzer2018logical} for~\dl and can thus mathematically prove that the models fulfill their specified requirements. 

The main contributions of this paper, Theorem~\ref{thm:cheating:controller} and Theorem~\ref{thm:passenger:controller}, formulate and prove conditions that when fulfilled, ensure the model cannot be proven safe if it is susceptible to the above modeling errors. Essentially, a loop invariant is used not only to reason about the model inductively but also to ensure that the interaction between the controller and the other components in the model is as intended; the two theorems provide conditions on the relation between the assumptions and the loop invariant. Furthermore, these conditions give hints as to when a suggested loop invariant for the model is sufficiently strong to avoid modeling errors. The problems are illustrated with a running example of an automated driving controller that shows that they can appear in real models. It is then proven that the formulated conditions have the intended effect. Finally, it is shown by example that the method captures the problematic cases and also increases confidence in a correct model free from the considered modeling errors.

\section{Preliminaries}

The logic \dl uses \emph{hybrid programs}~(HP) to model hybrid systems. An HP $\upalpha$ is defined by the following grammar, where $\upalpha,\,\upbeta$ are HPs, $x$ is a variable, $e$ is a term\footnote{Terms are polynomials with rational coefficients defined by $e,\,\Tilde{\!e}\;\Coloneqq x\;|\;c\in\rational\;|\;e+\,\Tilde{\!e}\;|\;e\cdot\,\Tilde{\!e}$.}, and $P$ and $Q$ are formulas of first-order logic of real arithmetic (FOL)\footnote{First-order logic formulas of real arithmetic are defined by $P,\,Q\Coloneqq e\ge\,\Tilde{\!e}\;|\;e=\,\Tilde{\!e}\;|\;\neg P\;|\;P \land Q\;|\;P \lor Q\;|\;P\to Q\;|\;P\leftrightarrow Q\;|\;\forall x P\;|\;\exists x P$ }:
\begin{equation*}\label{eqn:HPgrammar}
\upalpha\,\Coloneqq x \coloneqq e \; | \; x\coloneqq * \; | \; ?P \; | \; x'=f(x)\,\&\,Q \; | \; \upalpha\,\cup\,\upbeta \; | \; \upalpha;\,\upbeta \; | \; \upalpha^* 
\end{equation*}

Each HP $\upalpha$ is semantically interpreted as a reachability relation $\hpsb{\upalpha} \subseteq \stateset \times \stateset$, where $\stateset$ is the set of all states. If $\varset$ is the set of all variables, a state $\upomega \in \stateset$ is defined as a mapping from $\varset$ to $\real$, i.e., $\upomega \colon \varset \to \real$. The notation $(\upomega, \upnu) \in \hpsb{\upalpha}$ denotes that final state $\upnu$ is reachable from initial state $\upomega$ by executing the HP $\upalpha$. $\upomega \hpsb{e}$ denotes the value of term $e$ in state $\upomega$, and for $x \in \varset$, $\upomega(x) \in \real$ denotes the real value that variable $x$ holds in state $\upomega$. Given a state~$\upomega_1$, a state~$\upomega_2$ can be obtained by assigning the terms $\{ e_1, \dotsc, e_n \}$ to the variables $y=\{y_1, \dotsc, y_n \} \subseteq \varset$ and letting the remaining variables in~$\varset$ be as in~$\upomega_1$, that is, $\upomega_2(y_i) = \upomega_1\hpsb{e_i}$ for $1 \le n$ and $\upomega_2(v) = \upomega_1(v)$ for all $v\in\varset\setminus y$. Let $\upomega_2 = \upomega_1(y_1\coloneqq e_1, \dotsc, y_n \coloneqq e_n)$ be a shorthand for this assignment. 
For a FOL formula $P$, let $\dlsb{P} \subseteq \stateset$ be the set of all states where $P$ is true, thus 
$\upomega \in \dlsb{P}$ denotes that $P$ is true in state $\upomega$.
If~$P$ is parameterized by $y_1, \dotsc, y_n$, then $\upomega \in \dlsb{P}$ means that $\upomega \in \dlsb{P(\upomega(y_1),\dotsc,\upomega(y_n))}$. A summary of the program statements in HP and their transition semantics~\cite{platzer2018logical} is given in Table~\ref{tab:hp}.

\begin{table}[!ht]\small
    \caption{Semantics of HPs~\cite{platzer2018logical}. $P,Q$ are first-order formulas, $\upalpha,\upbeta$ are HPs.}
    \label{tab:hp}
    \centering
    \begin{tabularx}{\linewidth}{@{}p{6.5em} @{\quad=\,} X@{}}
         \toprule
         \multicolumn{1}{l}{Statement} & \multicolumn{1}{l}{Semantics} \\
         \midrule
         $\hpsb{x\coloneqq e}$ & $\Bigl\{\, (\upomega,\upnu) : \upnu = \upomega(x\coloneqq e) \,\Bigr\}$ \\
         $\hpsb{x\coloneqq *}$ & $\Bigl\{\, (\upomega,\upnu) : c \in \real \textrm{ and } \upnu = \upomega(x \coloneqq c)\,\Bigr\}$ \\
         $\hpsb{?P}$ & $\Bigl\{\, (\upomega, \upomega) : \upomega \in \dlsb{P}\, \Bigr\}$\\
         $\hpsb{x'=f(x)\,\&\,Q}$ & $\Bigl\{\, (\upomega, \upnu) : \upphi(0)=\upomega(x'\coloneqq f(x))\ \text{and}\ \upphi(r) = \upnu\ \text{for a solution}\ \upphi : [0,r] \to \stateset\ \text{of any duration}\ r\ \text{satisfying}\ \upphi \models x'=f(x) \land Q \,\Bigr\}$\\
         $\hpsb{\upalpha \cup \upbeta}$  & $ \hpsb{\upalpha} \cup \hpsb{\upbeta}$ \\
         $\hpsb{\upalpha;\,\upbeta}$ & $\hpsb{\upalpha} \circ \hpsb{\upbeta} = \Bigl\{\, (\upomega, \upnu) : (\upomega, \upmu) \in \hpsb{\upalpha}, (\upmu, \upnu) \in \hpsb{\upbeta} \,\Bigr\}$\\
         $\hpsb{\upalpha^*}$ & $\hpsb{\upalpha}^* = \bigcup\limits_{n\in\naturalnum_0} \hpsb{\upalpha^n}\ \text{with}
         \ \upalpha^{0} \equiv\, ?true\ 
         \text{and}\ \upalpha^{n+1} \equiv \upalpha^{n};\upalpha
         .$\\
         \bottomrule
    \end{tabularx}\vspace{-0.4cm}
\end{table}

The sequential composition $\upalpha;\,\upbeta$ expresses that $\upbeta$ starts executing after $\upalpha$ has finished. The \emph{nondeterministic choice} operation expresses that the HP $\upalpha \cup \upbeta$ can nondeterministically choose to follow either $\upalpha$ or $\upbeta$. The \emph{test} action $?P$ has no effect in a state where $P$ is true, i.e., the final state $\upomega$ is same as initial state $\upomega$. If however $P$ is false when $?P$ is executed, then the current \run\ of the HP \emph{aborts}, meaning that no transition is possible and the entire current \run\ is removed from the set of possible behaviors of the HP. The \emph{nondeterministic repetition} $\upalpha^*$ expresses that $\upalpha$ repeats $n$ times for any $n \in \naturalnum_0$. Furthermore, test actions can be combined with sequential composition and the choice operation to define \emph{if-statements} as: 
\begin{equation}\label{eq:ifelse}
    \opif\,(P)\; \opthen\; \upalpha\; \opfi \equiv (?P;\,\upalpha) \cup (?\neg P)
\end{equation}

HPs model continuous dynamics as $ x'=f(x)\,\&\,Q$, which describes the \emph{continuous evolution} of $x$ along the differential equation system $x'=f(x)$ for an arbitrary duration (including zero) within the \emph{evolution domain constraint} $Q$. The evolution domain constraint applies bounds on the continuous dynamics and are first-order formulas that restrict the continuous evolution within that bound. $x'$ denotes the time derivative of $x$, where $x$ is a vector of variables and $f(x)$ is a vector of terms of the same dimension.

The formulas of \dl include formulas of first-order logic of real arithmetic and the modal operators $[\upalpha]$ and $\langle\upalpha\rangle$ for any HP $\upalpha$~\cite{platzer2012logics,platzer2018logical}. A formula~$\uptheta$ of \dl is defined by the following grammar ($\upphi,\,\uppsi$ are \dl formulas, $e,\,\Tilde{\!e}$ are terms, $x$ is a variable, $\upalpha$ is an HP):
\begin{equation}\label{eq:dlgrammar}
    \uptheta \Coloneqq e=\,\Tilde{\!e}\;|\;e\ge \,\Tilde{\!e}\;|\;\neg \upphi\;|\;\upphi \land \uppsi\;|\;\forall x\,\upphi\;|\;[\upalpha]\,\upphi
\end{equation}
The \dl formula $[\upalpha]\,\upphi$ expresses that \emph{all} non-aborting \run s of HP $\upalpha$ (i.e., the \run s where all test actions are successful) end in a state in which the \dl formula $\upphi$ is true. The formal semantics are defined by $\dlsb{[\upalpha]\,\upphi} = \{ \upomega \in \stateset : \forall \upnu \in \stateset \ldotp (\upomega, \upnu) \in \hpsb{\upalpha} \to \upnu \in \dlsb{\upphi} \}$. The \dl formula $\langle\upalpha\rangle\,\upphi$ means that there exists \emph{some} non-aborting \run\ leading to a state where $\upphi$ is true.  $\langle\upalpha\rangle\,\upphi$ is the dual to $[\upalpha]\,\upphi$, defined as $\langle\upalpha\rangle\,\upphi \equiv \neg[\upalpha]\neg\upphi$. Similarly, $>,\le,<,\lor,\to,\leftrightarrow,\exists x$ are defined using combinations of the operators in~\eqref{eq:dlgrammar}. A \dl formula $\uptheta$ is \emph{valid}, denoted $\models \uptheta$ if $\dlsb{\uptheta} = \stateset$.

The logic \dl and the interactive theorem prover \keyx support the specification and verification of hybrid systems. The \dl formula $(\assume) \to [\,\upalpha\,]\,(\guarantee)$ can be used to specify the correctness of an HP $\upalpha$ with respect to the requirement \guarantee. It expresses that, if the initial conditions described by the formula $\assume$ are true, then all (non-aborting) \run s of $\upalpha$ only lead to states where formula $\guarantee$ is true. \keyx takes such a \dl formula as input and successively decomposes it into several sub-goals according to the sound proof rules of \dl to prove the formula~\cite{platzer2012logics,platzer2018logical}. 

Often, modeling CPS as HPs involves execution of a controller together with a plant in a loop described by the nondeterministic repetition $\upalpha^*$. To prove properties of loops, like the property $(\assume) \to [\,\upalpha^*\,]\,(\guarantee)$, \keyx uses \emph{loop invariants}, provided by the user, to inductively reason about all (non-aborting) executions. Given a loop invariant (candidate) $\upzeta$, applying the loop invariant rule to the above formula would make the proof branch into the following three cases:
\begin{enumerate}[label=\texttt{loop} (\roman*) , wide=0pt]
    \item: $(\assume) \to \upzeta$, i.e., the initial state satisfies the invariant,
    \item: \label{it:invariant:loop} $\upzeta \to [\,\upalpha\,]\,\upzeta$, i.e., invariant remains true after one iteration of $\upalpha$ from any state where the invariant was true,
    \item: $\upzeta \to (\guarantee)$, i.e., the invariant implies the requirement.
\end{enumerate}

\section{Problem Scope}\label{sec:scope}
The scope of CPS considered in this paper are hybrid systems with closed-loop feedback control as described by Model~\ref{alg:general}. The \dl formula~\eqref{eq:general:dl} models the CPS as a HP that repeatedly executes in a loop and expresses the requirement on the CPS by the formula \guarantee. The HP in~\eqref{eq:general:dl} is composed of four different components, each of which is an HP and assigns four variables: the dynamic state~$s$ which evolves continuously, the control actions~$a$, the environment actions~$e$, and the time progress~$\tau$. Though the variables in Model~\ref{alg:general} are scalars, they can in general be vectors of any dimension. 

\begin{figure}[!t]
 \removelatexerror
  \begin{HP}[H]
    \DontPrintSemicolon
    \setstretch{0.5}
    \caption{The general model considered}
    \label{alg:general}
    \SetAlgoLined
    
    \begin{flalign} \label{eq:general:dl}
        (\assume) \to [(\env;\,\sys;\,\ctrl;\,\plant)^*]\,(\guarantee) &&
    \end{flalign}
    \begin{flalign}
        \env \triangleq &\:e\coloneqq*;\,?\,P(s,e,a) & \label{eq:general:env}
    \end{flalign}
    \begin{flalign}\label{eq:general:sys}
        \sys \triangleq a\coloneqq *;\,?\,Q(s, e, a) &&
    \end{flalign}
    \begin{flalign}\label{eq:general:ctrl}
             \ctrl \triangleq \opif \, \lnot\safe(s,e,a)\, \opthen \, a\coloneqq *;\,?\,C(s,e,a) \, \opfi &&
    \end{flalign} 
    \begin{flalign}\label{eq:general:plant}
        \plant \triangleq \tau\coloneqq 0;\, s' = f(s),\,\tau' = 1\; \& \;  F(s,e,a,\tau) \land \tau \le \Tau  &&
    \end{flalign}   
  \end{HP}\vspace{-0.4cm}
\end{figure}

The environment~(\env) in~\eqref{eq:general:env} describes the environment behavior using a nondeterministic assignment followed by a test. The environment action~$e$ is nondeterministically assigned a real value which is then checked by the subsequent test for adherence to the environment assumptions~$P$, which define the operational domain. The auxiliary system~(\sys) describes the internal digital system that the controller interacts with, in addition to the environment. Similarly to \env, \sys~\eqref{eq:general:sys} nondeterministically assigns a real value to the control action~$a$ followed by a subsequent test which checks whether the internal assumptions~$Q$ hold. These internal assumptions typically describe conditions that stem from the design of the CPS such as physical limits on the system actuators.

The controller's~(\ctrl) task is to ensure that the requirement \guarantee is fulfilled and is modeled as an if-statement as seen in~\eqref{eq:general:ctrl}. First, the control action~$a$ set by \sys is tested with \safe. If the test is not \safe, then \ctrl overrides the control action~$a$ by the control law~$C$, and finally it passes on the control action to the \plant~\eqref{eq:general:plant}, which models the physical part of the system. It is described as an ordinary differential equation. However, the sampling time of \ctrl is bounded, so the evolution of \plant must stop before the sampling time~$T$~\cite{selvaraj2022formal}.

In the most abstract setting, the parameterized FOL formulas in Model~\ref{alg:general} are treated as uninterpreted predicates, which could be replaced by any concrete hybrid model with specific formulas and HPs, as long as the assignment of values to variables follows the flow of Model~\ref{alg:general}. Hence, the conclusions drawn from Model~\ref{alg:general} can be applied and used for a wide variety of hybrid systems.

\subsection*{Running Example: Automated Driving Controller}
To illustrate the problems and solutions, this paper considers an example of an in-lane automated driving feature for an AV, the \emph{\ego}. \fig~\ref{fig:funcarch} shows a simplified architecture of the automated driving feature, which can be modeled as a HP of the general form in Model~\ref{alg:general}. The safety requirement is for the \ego to safely stop for stationary obstacles that have entered its path.

The \emph{perception} senses the world around the \ego and corresponds to the \env in Model~\ref{alg:general}. The \env models the perception algorithms that communicate the obstacle position~$x_c$ to the controller and thus the \env assumptions describe the dynamics of the obstacles appearing in the \ego{}'s path. The \emph{nominal controller}, described by \sys, represents any algorithm solving the nominal driving task subjected to different constraints (e.g. comfort) and requests a nominal acceleration. Thus, \sys of the form in~\eqref{eq:general:sys} allows to keep the model parametric to arbitrary nominal controller implementations while being regarded as a black box. The \sys assumptions therefore capture design conditions on the nominal controller such as always requesting an acceleration within certain bounds. 

The \emph{safety controller} described by \ctrl ensures that only safe control actions, i.e., acceleration commands~$a$, are communicated to the actuators. It evaluates the nominal acceleration and overrides it with a safe acceleration if needed, thereby satisfying the safety requirement. Thus, the verification of the safety requirement can be limited to verifying the decision logic in one component, the safety controller.

\tikzset{
     block/.style={rectangle, draw,
                   text width=5em,
                   text centered, rounded corners, minimum height=3em, on grid},
     arrow/.style={-{Stealth[]},thick},
     no arrow/.style={-,every loop/.append style={-},thick},
     smallblock/.style={draw, rectangle,text width=4em, text centered, rounded corners, minimum height=2em}
     }

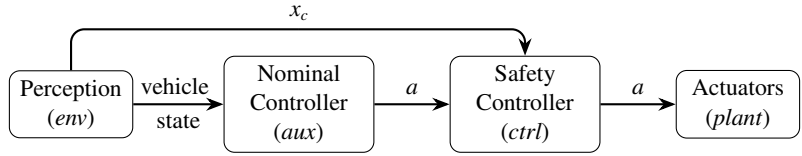
\begin{figure}[!t]
    \centering
    \begin{tikzpicture}[auto, font=\small]
    
    \node [smallblock] (sense) {Perception \\ (\env)};
    \node [block, right=1.2cm of sense.east] (nominal) {Nominal Controller \\ (\sys)};
    \node [block, right=of nominal.east] (safety) {Safety Controller \\ (\ctrl)}; 
    \node [smallblock, right=of safety.east] (vehicle) {Actuators \\ (\plant)};
    \node [coordinate, above=1em of nominal, label={$x_c$}] (t) {};
    
    \draw [arrow] (sense) --  node[name=v,above] {vehicle\;}
    node[name=s,below] {state}
    (nominal);
    \draw [arrow] (safety) -- node[name=as] {$a$} (vehicle);
    \draw [arrow] (nominal) -- node[name=an] {$a$}(safety); 
    \draw [no arrow, rounded corners=5pt] (sense) |- (t);
    \draw [arrow, rounded corners=5pt] (t) -| (safety);
    
    \end{tikzpicture}
    \caption{Architecture of the automated driving feature.}
    \label{fig:funcarch}\vspace{-0.4cm}
\end{figure}

The \plant is a dynamic model of the \ego. It is modeled as a double integrator with position~$x$ and velocity~$v$ of the \ego as the dynamic states and the acceleration~$a$ as control input, as seen in~\eqref{eq:m1:plant} of Model~\ref{alg:m1}. The \ego is not allowed to drive backwards, so $v$ must be non-negative through the entire evolution. In other words, the evolution would stop before $v$ gets negative.

In the next section, the general \dl formula in~\eqref{eq:general:dl} is refined with concrete descriptions of \env,~\sys, and~\ctrl to illustrate the modeling errors where a faulty controller can be proven safe. However, \assume, \plant, and \guarantee remain unchanged in the subsequent models and are shown in Model~\ref{alg:m1}. The initial condition \assume~\eqref{eq:m1:assume} specifies that the \ego starts stationary ($v=0$) at an arbitrary position $x$ before the position $x_c$ of an obstacle. It also sets up assumptions on the \emph{constant} parameters such as the minimum safety and nominal acceleration $\aMinS$ and $\aMinN$, and maximum nominal acceleration, $\aMaxN$, and that the sampling time~$\Tau$ is positive. These constant parameters do not change value during the execution of the HP $[(\env;\,\sys;\,\ctrl;\,\plant)^*]$, and therefore the assumptions on the constant parameters remain true in all contexts. The requirement that the \ego must stop before stationary obstacles is expressed by the post condition \guarantee~\eqref{eq:m1:guarantee}, which says that the obstacle's position may not be exceeded. 

\begin{figure}[!t]
 \removelatexerror
  \begin{HP}[H]
    \DontPrintSemicolon
    \setstretch{0.5}
    \caption{Example hybrid system}
    \label{alg:m1}
    \SetAlgoLined
    \noindent
    \begin{flalign}\label{eq:m1:assume}
        \assume \triangleq &\: v = 0 \land x \le x_c \land \aMinS > 0 \land \aMaxN > 0  &\nonumber\\ 
                          &  \land \aMinN > 0 \land \aMinS > \aMinN \land \Tau>0
    \end{flalign}
    \begin{flalign}\label{eq:m1:guarantee}
        \guarantee \triangleq (x \le x_c) &&
    \end{flalign}
    \begin{flalign}\label{eq:m1:plant}
        \plant \triangleq \tau\coloneqq 0;\, x' = v,v' = a,\tau' = 1\; \& \;  v \ge 0 \land \tau \le \Tau  &&
    \end{flalign}   
  \end{HP}\vspace{-0.4cm}
\end{figure}

\section{Discover Modeling Errors}

This section presents two erroneous models to illustrate how a faulty \ctrl can be proven safe with respect to~\guarantee. In the first case, shown in Model~\ref{alg:m2}, improper interaction between \env and \ctrl results in \env adapting to faulty \ctrl actions. 
Such an erroneous model can be proven safe since the loop invariant~$\upzeta$
is not strong enough to prevent improper interactions.
Theorem~\ref{thm:cheating:controller} gives conditions to strengthen~$\upzeta$ to avoid such issues. 
In the second erroneous case, Model~\ref{alg:m3}, the error arises due to over-constrained \env and \sys assumptions that discard executions where \ctrl is at fault.  Theorem~\ref{thm:passenger:controller} presents conditions to identify and avoid errors due to such over-constrained assumptions. 

\subsection{Exploiting Controller}\label{sec:ctrl:cheats}

Consider Model~\ref{alg:m2} where the assumptions on \env and \sys are given by~\eqref{eq:m2:env} and~\eqref{eq:m2:sys} respectively. \env assigns~$x_c$ such that it is possible to brake and stop before the position of the obstacle. This is necessary since if an obstacle appears immediately in front of the moving \ego it is physically impossible for any controller to safely stop the vehicle. \sys is a black box, but it is known that the nominal acceleration request~$a$ is bounded. The \ctrl test \safe~\eqref{eq:m2:safe} checks whether maximal acceleration for a time period of~$\Tau$ leads to a violation of the requirement, and if it does, the controller action~$C$~\eqref{eq:m2:ctrllaw} sets the deceleration to its maximum. This maximum deceleration is a symbolic value, parameterized over the other model variables.

\begin{figure}[!t]
 \removelatexerror
  \begin{HP}[H]
    \DontPrintSemicolon
    \setstretch{0.5}
    \caption{\ctrl is exploiting}
    \label{alg:m2}
    \SetAlgoLined
    \noindent
    \begin{flalign}
        \env \triangleq &\:x_c\coloneqq*;\,?\,\left(x_c-x \ge \frac{v^2}{2\aMinN} \right) & \label{eq:m2:env}
    \end{flalign}
    \begin{flalign}\label{eq:m2:sys}
        \sys \triangleq a\coloneqq *;\,?\,(-\aMinN \le a \le \aMaxN) &&
    \end{flalign}
    \begin{flalign}\label{eq:m2:ctrl}
             \ctrl \triangleq \opif \, \lnot\safe(x,v,x_c,a) \, \opthen\, a\coloneqq *;\,?\,C(x,v,x_c,a)\, \opfi &&
    \end{flalign} 
    \begin{flalign}\label{eq:m2:safe}
        \safe(x,v,x_c,a) \triangleq \left( x_c - x \ge v\Tau + \dfrac{\aMaxN\,\Tau^2}{2} \right) &&
    \end{flalign}
    \begin{flalign}\label{eq:m2:ctrllaw}
        C(x,v,x_c,a) \triangleq a = -\aMinS &&
    \end{flalign}
  \end{HP}\vspace{-0.4cm}
\end{figure}

Denote by~$\uptheta$  the \dl formula~\eqref{eq:general:dl} together with the definitions of Model~\ref{alg:m1} and Model~\ref{alg:m2}. $\uptheta$ is proved~\cite{selvaraj2022:models} with the loop invariant $\upzeta_1 \equiv x \le x_c$. Though the goal is to find a proof that $\uptheta$ is valid, and thereby establish that \ctrl is safe with respect to \guarantee, it is in this case incorrect to draw that conclusion from the proof, as will now be shown. 

The \env assumption~\eqref{eq:m2:env} discards executions where the distance between the obstacle position~$x_c$ and the \ego position~$x$ is less than the 
minimum possible
braking distance of the \ego. This  assumption is reasonable as it only discards situations where it is physically impossible for \ctrl to safely stop the vehicle. Still, infinitely many \env behaviors are possible since $x_c$ is nondeterministically assigned any value that fulfills the assumption. Among other behaviors, this allows $x_c$ to remain constant, as would be the case for stationary obstacles.
However, due to improper interaction between \env and a faulty \ctrl, \env can be forced by \ctrl to not have $x_c$ constant.

Consider a state $\upomega_0 \in \dlsb{\upzeta_1}$, illustrated in \fig~\ref{fig:cheating:controller:step}, such that
\begin{align*}
    \upomega_0 \ssb{x} &= 0 &
    \upomega_0 \ssb{x_c} &= 1 &
    \upomega_0 \ssb{T} &= 1 & \\
    \upomega_0 \ssb{v} &= 0 &
    \upomega_0 \ssb{a} &= 1.8 &
    \upomega_0 \ssb{\aMaxN} &= 2 &
    \upomega_0 \ssb{\aMinN} = 3 \ .    
\end{align*}

The \ego is currently at $(x,v) = (0,0)$ as shown by the black circle. The hatched area represents all the points in the $xv$-plane from which it is possible to stop before the obstacle position, $x_c$, at the dashed vertical line. It holds that $(\upomega_0, \upomega_0)\in\hpsb{\env}$ since $x_c - x = 1 \ge 0^2/(2\times 3)=v^2/(2\aMinN)$, so the assumptions on \env allow $x_c = 1$. This can also be seen in the figure since the black circle is within the hatched area. The arrow labeled~$a$ in \fig~\ref{fig:cheating:controller:step} represents the acceleration request by \sys, and if \plant evolves for 1~second with~$a$ as input, the \ego ends up at the white circle. As~$a$ is within the bounds of \sys, it holds that $(\upomega_0, \upomega_0)\in\hpsb{\sys}$. The controller \ctrl is \safe with this choice since~$x_c$ is not passed if maximum acceleration~$\aMaxN$ is input to \plant, as illustrated by the gray circle in the figure. Formally, $x_c - x = 1 \ge 0\times 1 + 2\times 1^2/2 = vT + \aMaxN T^2/2$ and therefore it holds by~\eqref{eq:m2:safe} that $\upomega_0 \in \dlsb{\safe(x, v, a_n, a)}$. Thus, $(\upomega_0, \upomega_0) \in \hpsb{\ctrl}$. Let $\upomega_1 = \upomega_0(x \coloneqq 0.9, v \coloneqq 1.8)$. Now it holds that $(\upomega_0, \upomega_1) \in \hpsb{\plant}$, i.e., starting at $x=0$ and $v=0$, with $a=1.8$ as input, \plant evolves to $x=0.9$ and $v=1.8$ in 1~second.
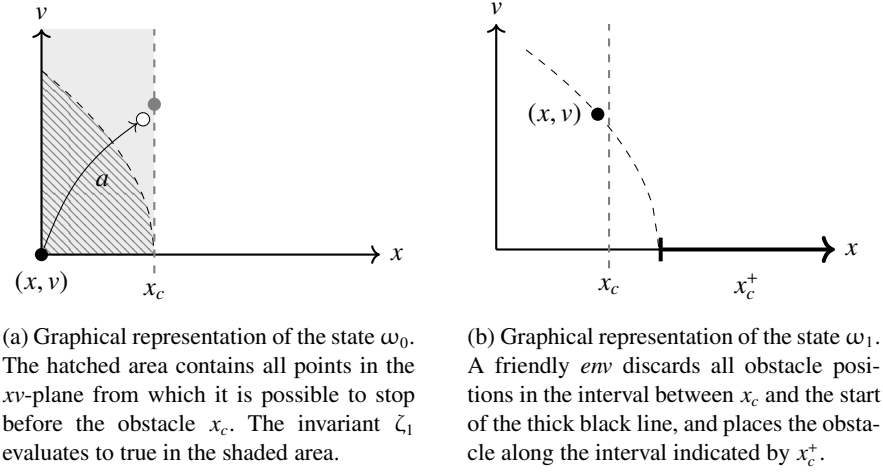
\begin{figure}[tbp]
    \centering
    \subcaptionbox{Graphical representation of the state~$\upomega_0$. The hatched area contains all points in the $xv$-plane from which it is possible to stop before the obstacle~$x_c$. The invariant~$\upzeta_1$ evaluates to true in the shaded area.\label{fig:cheating:controller:step}}[.45\linewidth]{
        \begin{tikzpicture}
        \fill [fill=gray!15] (1.5,3) -- (1.5,0) -- (0,0) -- (0,3) -- cycle;
        \draw [xscale=1.5, pattern=north west lines, pattern color=gray, dashed, draw=black] plot [domain=0:1, smooth, variable=\x] ({\x}, {sqrt(2*3*(1-\x))}) -- (1,0) -- (0,0) -- (0,2.449) -- cycle;

        \draw [->, thick, xscale=1.5] (0,0) -- (3,0) node[right] {$x$};
        \draw [->, thick] (0,0) -- (0,3) node[above] {$v$};
        \draw [dashed, gray, thick, xscale=1.5] (1,-0.25) node [below, black] {$x_c$} -- (1,3);
        \node [circle, minimum size=5pt, inner sep=0pt, fill=black, label=below:{$(x, v)$}] (xv) at (0,0) {};
        \draw [xscale=1.5] (0.9,1.8) node (x1) [circle, draw=black,fill=white, inner sep=0, minimum size=5pt] {};
        \draw [->, out=75, in=-135, looseness=1, xscale=1.5] (xv) edge node [label=right:{$a$}] {} (x1);
        \draw [xscale=1.5] (1,2) node (xM) [circle, fill=gray, inner sep=0, minimum size=5pt] {};
        \end{tikzpicture}
    }\hspace{0.05\linewidth}
    \subcaptionbox{Graphical representation of the state~$\upomega_1$. A friendly \env discards all obstacle positions in the interval between $x_c$ and the start of the thick black line, and places the obstacle along the interval indicated by $x_c^+$.\label{fig:cheating:controller:obstacle}}[.45\linewidth]{
        \begin{tikzpicture}
        \draw [-, thick, xscale=1.5] (0,0) -- (3,0) node[right] {$x$};
        \draw [->, thick] (0,0) -- (0,3) node[above] {$v$};
        \draw [dashed, gray, thick, xscale=1.5] (1,-0.25) node [below, black] {$x_c$} -- (1,3);
        \draw [xscale=1.5] (0.9,1.8) node (x1) [circle, fill=black, inner sep=0, minimum size=5pt, label=left:{$(x, v)$}] {};
        \draw [domain=0.265:1.44, smooth, variable=\x, dashed, xscale=1.5] plot ({\x}, {sqrt(2*3*(1.44-\x))}) node [below] {};
        \draw [|->, line width=1.5pt, xscale=1.5] (1.44,0) -- node [below, yshift=-5pt] {$x^+_c$} (3,0);
        \end{tikzpicture}
    }
    \caption{The controller chooses an action such that the \plant evolves to a state where $x\le x_c$. In the next loop iteration, \env moves~$x_c$ to adapt to the controller's action.}
    \label{fig:cheating:controller}\vspace{-0.4cm}
\end{figure}

After \plant has evolved and the system has transited to $\upomega_1$, the \ego is now at the black circle in \fig~\ref{fig:cheating:controller:obstacle}. It is clear that $\upomega_1 \in \dlsb{\upzeta_1}$ as $x\le x_c$. The intersection of the dashed curve with the $x$-axis in \fig~\ref{fig:cheating:controller:obstacle} represents the lower bound for~$x_c$ to satisfy~\eqref{eq:m2:env} in the state~$\upomega_1$. Therefore, in the next iteration, $x_c$ can only be positioned somewhere along the interval indicated by the thick black line in \fig~\ref{fig:cheating:controller:obstacle} and all other values are discarded by~\eqref{eq:m2:env}. Semantically, as $x_c -x = 0.1 < 2^2/(2\times3) = v^2/(2\aMinN)$, it follows that $(\upomega_1, \upomega_1) \not\in \hpsb{\env}$ so $x_c$ cannot be kept constant between iterations.

To summarize, it holds that $\upomega_0 \in \dlsb{\upzeta_1}$, $(\upomega_0, \upomega_1) \in \hpsb{\env;\sys;\ctrl;\plant}$, and $\upomega_1 \in \dlsb{\upzeta_1}$. The acceleration requested by \sys is \safe{}'d by \ctrl in~$\upomega_0$ because the worst-case acceleration~$\aMaxN$ in~$\upomega_0$ leads to a state that fulfills~$\upzeta_1$, and therefore also fulfills $\guarantee$. Since there exists no control action allowed by the system dynamics in the assumed operational domain that can fulfill $\guarantee$ from~$\upomega_1$, the decision made by \ctrl is unsafe in this case. However, since $(\upomega_1, \upomega_1) \not\in \hpsb{\env}$, Model~\ref{alg:m2} can be proven to fulfill \guarantee with this faulty \ctrl.

So, the model is proven to fulfill \guarantee only because \env is not allowed to keep the obstacle stationary. Thus, \ctrl exploits the behavior of \env to move the obstacle so \ctrl can keep accelerating rather than stopping safely. Though \env is assumed to discard only those behaviors where it is physically impossible for \ctrl to fulfill \guarantee, the interaction between \env and \ctrl causes \env to behave in a friendly way to adapt to faulty \ctrl actions, thereby discarding \env behaviors in which $x_c$ remains constant.

\begin{problem}\label{prob:1}
    How can the \dl formula~\eqref{eq:general:dl} be guaranteed not to be valid with a  controller that exploits the environment?
\end{problem}

Observe from Fig.~\ref{fig:cheating:controller:step} that for the state~$\upomega_0$, the shaded area describes the region where the loop invariant~$\upzeta_1$ holds. The hatched area describes the states from where it is possible for \ctrl to stop before the obstacle~$x_c$, i.e., all the $xv$-points for which the \env assumption $x_c - x \geq \frac{v^2}{2\aMinN}$ in~\eqref{eq:m2:env} is true. The shaded area contains some states in the $xv$-plane that are outside of the hatched area. From these states it is not possible for \ctrl to stop before~$x_c$. Thus, control actions leading to such states should not be allowed. However, $\upzeta_1$ is not strong enough to prevent this. If $\upzeta_1$ is strengthened to allow only states contained in the hatched area then the controller is prevented from exploiting the environment. In other words, any state allowed by the loop invariant shall also be allowed by the \env assumptions, i.e., the loop invariant should imply the \env assumptions.

The assumption $x_c - x \geq \frac{v^2}{2\aMinN}$ in~\eqref{eq:m2:env} corresponds to~$P$ in the generalized Model~\ref{alg:general}. Therefore, it can be hypothesized from the above observation that the required condition to solve Problem~\ref{prob:1} can be stated as $\upzeta \to P$, where $\upzeta$ is the loop invariant and $P$ is the \env assumptions. Indeed, the condition $\upzeta \to P$ solves Problem~\ref{prob:1} for Model~\ref{alg:m2}. However, Problem~\ref{prob:1} is not specific to Model~\ref{alg:m2} and it remains unestablished whether $\upzeta \to P$ solves Problem~\ref{prob:1} for models of the general form considered in Model~\ref{alg:general}. For example, in Model~\ref{alg:m2}, the controller exploits the friendliness of \env to not keep the obstacle position~$x_c$ constant between iterations, i.e., $x_c \neq x_c^+$ for two \env actions $(x_c, x_c^+)$. Admittedly, such a behavior does not characterize friendly behavior in all models. In general, the relation between two \env actions $(e_0, e_1)$ can be any relation $R \subseteq \real \times \real$ such that $(e_0, e_1) \in R$. Note that $R$ only defines certain behaviors in the assumed operational domain. In Model~\ref{alg:m2}, the exploiting controller could be proven safe because the environment behaves \emph{friendly} by discarding some behaviors characterized by~$R$. This is illustrated in Fig.~\ref{fig:cheating:controller:obstacle} where $x_c$ cannot be kept constant as $(\upomega_1, \upomega_1) \not\in \hpsb{\env}$. 

\begin{definition}\label{def:friendly:env}
If there exists two states $\upomega_0$ and $\upomega_1$ that differ only in the assignment of the \env variable $e$, i.e., $\upomega_0(e) = e_0$ and $\upomega_1 =\upomega_0(e:=e_1)$, and such that $(e_0, e_1) \in R$ and $(\upomega_0, \upomega_1) \not\in \hpsb{\env}$, then the environment \env is \emph{friendly} w.r.t the relation $R$. Thus, \env is \emph{unfriendly} if $(e_0, e_1) \in R \to (\upomega_0, \upomega_1) \in \hpsb{\env}$ is true in all states $\upomega_0$ and $\upomega_1$ that differ only in the assignment of the \env variable $e$.
\end{definition}

The hypothesis $\upzeta \to P$ can now be generalized to include the relation $R$ to describe the existence of an unfriendly \env  as: 
\begin{nalign}\label{eq:invariant:implies:environment}
        \uprho &\equiv \forall s \ldotp \forall e \ldotp \forall e_1 \ldotp \Bigl(\upzeta(s, e) \land R(e, e_1) \to \langle \env \rangle\, (e = e_1) \Bigr)\ ,
\end{nalign}
where $\upzeta$ is parameterized to make it explicitly depend on the variables of the HP. The meaning of~$\uprho$ is that, if a state fulfills the invariant, then for every next \env action~$e_1$ characterized by $R$ there is at least one \run\ of \env in which the action~$e_1$ is chosen. 

The loop invariant $\upzeta_1 \equiv x \le x_c$ is used to prove the \dl formula~\eqref{eq:general:dl} with the definitions of Model~\ref{alg:m1} and Model~\ref{alg:m2}. Thus, it follows that $\models \upzeta_1 \to [\env;\,\sys;\,\ctrl;\,\plant]\,\upzeta_1$ holds by~\ref{it:invariant:loop}. But,~$\upzeta_1$ is not strong enough to prevent control actions that exploit friendly \env behaviors. For instance, as illustrated in Fig.~\ref{fig:cheating:controller}, the control action that leads to~$\upomega_1$ from $\upomega_0$ should not be allowed since \env must discard some behaviors from $\upomega_1$ to preserve $\upzeta_1$. These discarded behaviors include all \run s where $(\sys;\,\ctrl;\,\plant)$ do not preserve~$\upzeta_1$. Thus \ctrl \emph{exploits} \env to act friendly such that $\upzeta_1$ is preserved.

\begin{definition}\label{def:exploiting:ctrl}
A controller \ctrl \emph{exploits} a friendly environment \env w.r.t the relation~$R$ if the loop invariant~$\upzeta$ is preserved by the loop body, i.e. $\models \upgamma$, where
\begin{nalign}\label{eq:exploiting:hp:loop}
        \upgamma&\equiv\forall s \ldotp \forall e \ldotp \Bigl(\upzeta(s, e) \to [\env;\,\sys;\,\ctrl;\,\plant]\,\upzeta(s, e)\Bigr)\ ,
\end{nalign} but 
\begin{nalign}\label{eq:exploiting:hp}
        \exists s \ldotp \exists e \ldotp \exists e_0 \ldotp \Bigl(\upzeta(s, e_0) \land R(e_0, e) \land \langle \sys;\,\ctrl;\,\plant\rangle\,\lnot\,\upzeta(s, e)\Bigr)\ .
\end{nalign}
\end{definition}
Thus, \ctrl exploits \env if it makes it necessary for \env to behave friendly. In the following theorem it is shown that an exploiting controller can be prevented if the loop invariant is strong enough to ensure the existence of an unfriendly environment.  

\begin{theorem}\label{thm:cheating:controller}
    Let $s$ and $e$ be variables used in \plant and \env respectively as defined in Model~\ref{alg:general}. Let~$\upzeta(s,e)$ be a loop invariant candidate, and let~$R$ be a relation over the domain of $e$. Let $\upgamma$~\eqref{eq:exploiting:hp:loop} be the \dl formula from the inductive step~\ref{it:invariant:loop} of the loop invariant proof rule, and let $\uprho$ be as defined by~\eqref{eq:invariant:implies:environment}. If $\upgamma \land \uprho$ is valid, then the loop invariant candidate $\upzeta(s,e)$ is sufficiently strong to prevent an exploiting controller.
\end{theorem}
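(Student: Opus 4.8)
The plan is to first unpack the phrase ``$\upzeta(s,e)$ is sufficiently strong to prevent an exploiting controller''. By Definition~\ref{def:exploiting:ctrl}, the controller \ctrl exploits \env exactly when $\models\upgamma$ \emph{and} \eqref{eq:exploiting:hp} holds; since the hypothesis that $\upgamma\land\uprho$ is valid already gives $\models\upgamma$, it suffices to show that \eqref{eq:exploiting:hp} is \emph{not} valid. I would prove the stronger fact that its negation
\begin{equation*}
    \uppsi \;\equiv\; \forall s \ldotp \forall e \ldotp \forall e_0 \ldotp \bigl(\upzeta(s,e_0) \land R(e_0,e) \to [\sys;\,\ctrl;\,\plant]\,\upzeta(s,e)\bigr)
\end{equation*}
is valid, deriving this from $\models\upgamma$ together with $\models\uprho$.

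Let $\upomega$ be an arbitrary state with $\upomega\in\dlsb{\upzeta(s,e_0)}$ and $\upomega\in\dlsb{R(e_0,e)}$; the goal reduces to $\upomega\in\dlsb{[\sys;\,\ctrl;\,\plant]\,\upzeta(s,e)}$. The key construction is the ``rewound'' state $\upmu \coloneqq \upomega(e\coloneqq\upomega(e_0))$ that copies the previous \env value back into $e$. Because $\upzeta$ mentions only $s$ and $e$, evaluating $\upzeta(s,e)$ in $\upmu$ is the same as evaluating $\upzeta(s,e_0)$ in $\upomega$, so $\upmu\in\dlsb{\upzeta(s,e)}$. Instantiating $\upgamma$ at $\upmu$ and splitting the box along the sequential composition, $[\env;\,\sys;\,\ctrl;\,\plant]\,\upzeta(s,e) \equiv [\env]\,[\sys;\,\ctrl;\,\plant]\,\upzeta(s,e)$, yields $\upmu\in\dlsb{[\env]\,[\sys;\,\ctrl;\,\plant]\,\upzeta(s,e)}$.

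Next I would show $(\upmu,\upomega)\in\hpsb{\env}$. Since $\env\triangleq e\coloneqq *;\,?\,P(s,e,a)$ and $\upmu$ agrees with $\upomega$ on every variable but $e$, the nondeterministic assignment can set $e$ to $\upomega(e)$, thereby reproducing $\upomega$; it then remains to check that the test passes, i.e., $\upomega\in\dlsb{P(s,e,a)}$, and this is exactly where $\uprho$ is used. Instantiate the quantifiers of $\uprho$ at $\upmu$ with $\upmu(s)$, with $\upmu(e)=\upomega(e_0)$, and with $\upomega(e)$ for $e_1$: the premise $\upzeta(s,e)$ holds by the previous paragraph and the premise $R(e,e_1)$ becomes $R(\upomega(e_0),\upomega(e))$, which holds by assumption on $\upomega$, so the conclusion $\langle\env\rangle(e=e_1)$ holds. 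Unfolding this diamond over $\env = e\coloneqq *;\,?\,P(s,e,a)$, and using that the fresh variable $e_1$ occurs in neither $\env$ nor $P$ (so the assignment in $\env$ is forced to pick the value $\upomega(e)$ and the subsequent test is then evaluated at $\upomega$), gives precisely $\upomega\in\dlsb{P(s,e,a)}$. Hence $(\upmu,\upomega)\in\hpsb{\env}$, and combining this with $\upmu\in\dlsb{[\env]\,[\sys;\,\ctrl;\,\plant]\,\upzeta(s,e)}$ yields $\upomega\in\dlsb{[\sys;\,\ctrl;\,\plant]\,\upzeta(s,e)}$. As $\upomega$ was arbitrary, $\uppsi$ is valid, so \eqref{eq:exploiting:hp} is not valid; together with $\models\upgamma$ this says that \ctrl does not exploit \env, which is the claim.

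The conceptual heart I expect to matter is identifying the right witness state, $\upmu=\upomega(e\coloneqq\upomega(e_0))$: it satisfies $\upzeta$, so $\upgamma$ applies to it, and from it $\uprho$ lets \env step (back) to $\upomega$, which transports the inductive invariance guarantee of $\upgamma$ through \env and onto $\upomega$. The main technical obstacle is making this precise at the level of state semantics: faithfully tracking how the parameterized formulas $\upzeta(s,e_0)$ and $R(e_0,e)$ are read through $\upmu$ versus $\upomega$, verifying that the auxiliary quantifier variables $e_0,e_1$ do not occur in $\env$, $P$, or $\upzeta$ so that the relevant substitutions commute, and extracting from $\langle\env\rangle(e=e_1)$ the fact that the inner nondeterministic assignment must choose exactly $\upomega(e)$.
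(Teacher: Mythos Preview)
Your proposal is correct and establishes exactly the same key implication as the paper, namely that $\upgamma\land\uprho$ entails $\forall s\ldotp\forall e_0\ldotp\forall e\ldotp\bigl(\upzeta(s,e_0)\land R(e_0,e)\to[\sys;\,\ctrl;\,\plant]\,\upzeta(s,e)\bigr)$, which is the negation of~\eqref{eq:exploiting:hp}. The only difference is presentational: the paper discharges this implication by a machine-checked \keyx proof, whereas you unfold it by hand at the level of state semantics via the witness $\upmu=\upomega(e\coloneqq\upomega(e_0))$; your semantic argument is sound and in fact makes explicit the reasoning that the \keyx proof encapsulates.
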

\begin{proof}
The following \dl formula is proved~\cite{selvaraj2022:models} in~\keyx:
    \begin{nalign}
        & \upgamma \land \uprho \to \forall s \ldotp \forall e_0 \ldotp \forall e \ldotp \Bigl(\upzeta(s, e_0) \land R(e_0, e) \to [\sys;\,\ctrl;\,\plant]\, \upzeta(s, e) \Bigr)\ .
        \label{eq:cheating:conclusion} 
    \end{nalign} 
This asserts that the loop invariant is strong enough to prevent \ctrl from exploiting \env's friendly behavior because the clause implied by $\upgamma \land \uprho$ in~\eqref{eq:cheating:conclusion} is the negation of~\eqref{eq:exploiting:hp}. 
\qed
\end{proof} 
 
In addition to solving Problem~\ref{prob:1}, \thm~\ref{thm:cheating:controller} gives hints on how the loop invariant must be constructed. In some cases, as in \fig~\ref{fig:cheating:controller} where $x_c \leq x_c^+$, it suggests that $\upzeta \equiv P$ might be a loop invariant candidate. In summary, \thm~\ref{thm:cheating:controller} is useful in two ways: \begin{enumerate*}[label=(\roman*)]
    \item By adding~$\uprho$ to a \dl formula, it is known that a proof of validity is not because \env is friendly to \ctrl, 
    \item $\uprho$ can also be a useful tool to aid in the search for a loop invariant.
\end{enumerate*}

For the specific model instance considered in this section, and probably others, changes to the model can ensure that the environment is not too friendly. However, as this paper deals with modeling errors and ascertaining that models cannot be proven valid for wrong reasons, such changes do not solve the general problem, but might nonetheless be good as best practices to avoid modeling pitfalls.

\subsection{Unchallenged Controller}\label{sec:ctrl:passenger}
The previous section dealt with modeling problems where \ctrl causes \env to exhibit friendly behaviors despite correct \env assumptions. This section discusses modeling problems due to over-constrained assumptions, whereby  \ctrl is never challenged.

Consider Model~\ref{alg:m3}, identical to Model~\ref{alg:m2}, except for \sys (\eqref{eq:m3:sys} and~\eqref{eq:m3:nomreq}). As before, $\sys$ is a black box. However, in addition to the acceleration bounds, $\sys$ also fulfills a design requirement $\nomreq$  given by~\eqref{eq:m3:nomreq}. $\nomreq$ describes that the nominal controller only requests an acceleration $a$ such that the \ego{} does not travel more than the braking distance (with $\aMinN$) from any given state in one execution of $\Tau$ duration. 
Similar to Model~\ref{alg:m2}, the requested acceleration is passed to the $\plant$ if the $\ctrl$ test $\safe$~\eqref{eq:m2:safe} is true; if not, the controller action~$C$~\eqref{eq:m2:ctrllaw} sets the maximal possible deceleration.

\begin{figure}[!t]
 \removelatexerror
  \begin{HP}[H]
    \DontPrintSemicolon
    \setstretch{0.5}
    \caption{\ctrl is unchallenged}
    \label{alg:m3}
    \SetAlgoLined
    
    \noindent
    \begin{flalign}
        \sys \triangleq &\:a\coloneqq *;\; ?\left(-\aMinN \le a \le \aMaxN \land \nomreq\right) &\label{eq:m3:sys}
    \end{flalign}
    \begin{flalign}\label{eq:m3:nomreq}
 \nomreq \triangleq &\: (v+a\Tau\ge 0) \to v\Tau + \dfrac{a\Tau^2}{2} \le \dfrac{v^2}{2\aMinN}
        \\\nonumber
             & \land (v+a\Tau < 0) \to a \le -\aMinN &
    \end{flalign}
  \end{HP}\vspace{-0.4cm}
\end{figure}

To verify that \ctrl fulfills \guarantee~\eqref{eq:m1:guarantee}, the~\dl formula~\eqref{eq:general:dl} together with the definitions in Model~\ref{alg:m1} and Model~\ref{alg:m3} must be proven valid. Though the validity can indeed be proven in~\keyx using the loop invariant $\upzeta_1 \equiv x \le x_c$, \ctrl is faulty. Strong \env and \sys assumptions might result in the invariant $\upzeta$ being true in all HP \run s irrespective of \ctrl's actions, and hence \ctrl is never verified. This manifests itself in Model~\ref{alg:m3}; \env assigns~$x_c$ such that it is possible to brake to stop before the position of the obstacle, and \sys assumes that the \ego does not travel more than the braking distance in $\Tau$ time. Therefore, \guarantee is true for all executions of $[\env;\,\sys;\,\plant]$, i.e., the model fulfills \guarantee no matter which branch of \ctrl is executed. Thus, this problem with strong \env and \sys assumptions, i.e., an over-constrained model such that \ctrl is not challenged in any HP \run, may allow a faulty controller be proven safe.

\begin{problem}\label{prob:2}
     How can the \dl formula~\eqref{eq:general:dl} be guaranteed not to be valid with an un\-challenged controller?
\end{problem}

In general, if \sys and/or \env assumptions are too strong, many relevant \run s may be discarded when the respective tests fail. A worst-case situation is when a contradiction is present in the assumption, thereby discarding all possible \run s of the HP. In that case, the \dl formula~\eqref{eq:general:dl} is vacuously true, irrespective of the correctness of \ctrl. In situations where all possible executions are discarded due to failed tests, a potential work-around is to check for such issues by proving the validity of $\assume \to \langle \env;\,\sys;\,\ctrl;\,\plant \rangle\,(\guarantee)$ to verify that there exists at least one \run\ of the hybrid program that fulfills  \guarantee. However, that work-around is not helpful to discover models susceptible to Problem~\ref{prob:2} because it is possible to prove that there is at least one \run\ of
 $(\env;\,\sys;\,\ctrl;\,\plant)$ for which $\guarantee$ is true even in over-constrained systems as seen in the HP with definitions of Model~\ref{alg:m1} and Model~\ref{alg:m3}.   

Observe that if $\ctrl$ is removed from the \dl formula~\eqref{eq:general:dl} and the formula is still valid, then $\ctrl$ is not verified. Equivalently, if the invariant is preserved when \ctrl is removed from the \dl formula, i.e., $\upchi \equiv \forall s \ldotp \forall e \ldotp \forall a \ldotp \upzeta \to [\env;\,\sys;\,\plant]\,\upzeta$ is valid, then $\ctrl$ is not verified. So the negation, i.e.,
\begin{nalign}\label{eq:environment:invalidates:invariant} 
    \lnot \upchi \equiv \exists s \ldotp \exists e \ldotp \exists a \ldotp \upzeta \land \langle \env;\,\sys;\,\plant\rangle \lnot \upzeta\ .
\end{nalign} 
can be proved to ascertain the absence of Problem~\ref{prob:2} in the proof of~\eqref{eq:general:dl}. 

\begin{definition}\label{def:challenging:ctrl}
For hybrid systems described by Model~\ref{alg:general} where the loop body is defined by (\env;\,\sys;\,\ctrl;\,\plant), 
\ctrl is \emph{challenged} w.r.t. $\env$, $\sys$, $\plant$, and the loop invariant~$\upzeta$ if $\upzeta \land \langle \env;\,\sys;\,\plant\rangle \, \lnot \upzeta$ is true in some state.
\end{definition}

However, proving $\lnot \upchi$~\eqref{eq:environment:invalidates:invariant} might not be beneficial in practice. While failed attempts to prove $\lnot\upchi$ might illuminate modeling errors, the presence of $\env$, $\sys$, $\plant$, and their interaction might complicate both the proof attempts and the identification of problematic fragments of the HP, especially for large and complicated models. 

Note that if there exists one \run\ of (\env;\,\sys) that does not preserve the invariant $\upzeta$, then $\ctrl$ must choose a safe control action such that the hybrid system can be controlled to remain within the invariant states, i.e, $\dlsb{\upzeta}$. However, this is not sufficient to conclude that the controller is verified to be safe since it could be the case that for all such invariant violating \run s, the \plant forces the hybrid system back into the invariant states. Therefore, it is necessary that not all \run s of the uncontrolled \plant  reestablish the invariant. So, if (\env;\,\sys) does not preserve the invariant, \plant does not reestablish the invariant, then \ctrl is indeed verified to be safe as shown in Theorem~\ref{thm:passenger:controller}.

\begin{theorem}\label{thm:passenger:controller}
    Let $s$, $e$, and $a$ be variables used in \plant, \env, and \ctrl respectively as defined in Model~\ref{alg:general}, and let the loop invariant candidate~$\upzeta(s,e,a_1)$ be a specific instantiation of the \dl formula~$\upzeta(s,e,a)$. Let
    \begin{nalign}
        \uppsi &\equiv \exists s \ldotp \exists e \ldotp \exists a_1 \ldotp \Bigl( \upzeta(s,e,a_1) \land \langle \env;\,\sys \rangle \Bigl(\lnot \upzeta(s,e,a)\; \land \langle \plant \rangle \lnot \upzeta(s,e,a_1)\Bigr)\Bigr)\ . \label{eq:passenger:sys:plant:breaks:inv}
    \end{nalign}
    Then, if  $\uppsi$ is valid, \ctrl is challenged in some \run s of $[\env;\,\sys;\,\ctrl;\,\plant]$.
\end{theorem}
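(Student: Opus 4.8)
The plan is to reduce the statement directly to Definition~\ref{def:challenging:ctrl}. Under the hypothesis $\models\uppsi$, it is enough to exhibit a state in which the formula $\lnot\upchi$ of~\eqref{eq:environment:invalidates:invariant}, i.e.\ $\exists s\ldotp\exists e\ldotp\exists a\ldotp\bigl(\upzeta\land\langle\env;\,\sys;\,\plant\rangle\lnot\upzeta\bigr)$, is true, because by Definition~\ref{def:challenging:ctrl} that is precisely the statement that \ctrl is challenged w.r.t.\ \env, \sys, \plant, and~$\upzeta$, which is the conclusion of the theorem. Following the pattern of the proof of \thm~\ref{thm:cheating:controller}, it therefore suffices to establish in \keyx the single \dl implication $\uppsi\to\lnot\upchi$, since combining it with the assumed validity of $\uppsi$ gives $\models\lnot\upchi$ and hence that $\lnot\upchi$ holds in some state.

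For the implication I would first argue semantically. Take a state $\upomega_0$ witnessing $\uppsi$: it fixes values for $s$, $e$, $a_1$ with $\upomega_0\in\dlsb{\upzeta(s,e,a_1)}$ and $\upomega_0\in\dlsb{\langle\env;\,\sys\rangle\bigl(\lnot\upzeta(s,e,a)\land\langle\plant\rangle\lnot\upzeta(s,e,a_1)\bigr)}$. Unfolding the outer diamond yields a state $\upomega_1$ with $(\upomega_0,\upomega_1)\in\hpsb{\env;\,\sys}$, $\upomega_1\in\dlsb{\lnot\upzeta(s,e,a)}$, and $\upomega_1\in\dlsb{\langle\plant\rangle\lnot\upzeta(s,e,a_1)}$, and unfolding the inner diamond yields $\upomega_2$ with $(\upomega_1,\upomega_2)\in\hpsb{\plant}$ and $\upomega_2\in\dlsb{\lnot\upzeta(s,e,a_1)}$. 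By the sequential-composition semantics of Table~\ref{tab:hp}, $(\upomega_0,\upomega_2)\in\hpsb{\env;\,\sys}\circ\hpsb{\plant}=\hpsb{\env;\,\sys;\,\plant}$, so $\upomega_0\in\dlsb{\upzeta\land\langle\env;\,\sys;\,\plant\rangle\lnot\upzeta}$, i.e.\ $\lnot\upchi$ holds. In the proof calculus the same argument is the axiom $\langle\upalpha;\,\upbeta\rangle\,\upphi\leftrightarrow\langle\upalpha\rangle\langle\upbeta\rangle\,\upphi$ to merge the two diamonds, monotonicity of $\langle\cdot\rangle$ to discard the auxiliary conjunct $\lnot\upzeta(s,e,a)$, and instantiation of the existentials of $\lnot\upchi$ by the witnesses extracted from $\uppsi$. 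The conjunct $\lnot\upzeta(s,e,a)$ is not strictly needed here, but it records the intended reading that running \env and \sys already breaks the invariant on its own, so that \ctrl genuinely has to act.

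I expect the one delicate point to be the bookkeeping between the two acceleration slots $a$ and $a_1$. The loop invariant, as exercised by the \looprule\ for $[(\env;\,\sys;\,\ctrl;\,\plant)^*]$, is checked on loop-top states, where $a$ still holds the value produced in the previous iteration, whereas along $\env;\,\sys;\,\plant$ the variable $a$ is reassigned by \sys and then left unchanged, since \plant touches only $s$ and $\tau$. Reading $\upzeta(s,e,a_1)$ as a \emph{specific instantiation} of $\upzeta(s,e,a)$ freezes the third slot so that the invariant is checked against the same fixed acceleration at the start and after \plant, and the step needing attention in \keyx is to instantiate the $\exists a$ of $\lnot\upchi$ consistently with $a_1$. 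In the common case where $\upzeta$ does not depend on $a$ (as in the running example, with $\upzeta_1\equiv x\le x_c$) the two slots coincide and the composition above goes through verbatim; if $\upzeta$ genuinely depends on $a$, a clean fallback is to use that \plant admits zero-duration evolutions, so a zero-duration \plant step after $\env;\,\sys$ leaves the state at $\upomega_1$, where $\lnot\upzeta(s,e,a)$ already holds with $a$ at the value assigned by \sys; then $\lnot\upchi$ follows from the first two conjuncts of $\uppsi$ alone, and the $\langle\plant\rangle\lnot\upzeta(s,e,a_1)$ conjunct is exactly the substantive modeling requirement motivated before the theorem --- that not every uncontrolled \plant run re-establishes the invariant. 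Modulo this bookkeeping the reachability composition is routine, so $\uppsi\to\lnot\upchi$ is discharged in \keyx and, with $\models\uppsi$, yields the claim.
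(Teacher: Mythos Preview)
Your proposal is correct and follows essentially the same route as the paper: prove in \keyx the implication from $\uppsi$ to the formula $\exists s\ldotp\exists e\ldotp\exists a_1\ldotp\bigl(\upzeta(s,e,a_1)\land\langle\env;\,\sys;\,\plant\rangle\lnot\upzeta(s,e,a_1)\bigr)$ and then invoke Definition~\ref{def:challenging:ctrl}. The paper simply states that this implication is discharged in \keyx, whereas you additionally spell out the semantic unfolding (diamond composition, monotonicity to drop the intermediate conjunct) and the $a$ versus $a_1$ bookkeeping; the paper avoids the latter by writing the conclusion directly with the frozen slot $a_1$ rather than routing through $\lnot\upchi$ as stated in~\eqref{eq:environment:invalidates:invariant}.
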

\begin{proof}
The following \dl formula is proved~\cite{selvaraj2022:models} in~\keyx:
\begin{nalign}
    & \uppsi \to \exists s \ldotp \exists e \ldotp \exists a_1 \ldotp \upzeta(s,e,a_1) \land \langle \env;\,\sys;\,\plant \rangle \lnot \upzeta(s,e,a_1)\ . \label{eq:passenger:conclusion} 
\end{nalign}

The \dl formula~$\uppsi$~\eqref{eq:passenger:sys:plant:breaks:inv} states that there exists at least one \run\ of (\env;\,\sys) where the invariant is not preserved, and \plant does not always reestablish the invariant. The implied clause~\eqref{eq:passenger:conclusion} asserts that $\ctrl$ is challenged by Definition~\ref{def:challenging:ctrl}.
\qed\end{proof}

By the conjunction of $\uppsi$~\eqref{eq:passenger:sys:plant:breaks:inv} to a \dl formula of the form~\eqref{eq:general:dl}, Theorem~\ref{thm:passenger:controller} can be used to identify Problem~\ref{prob:2} and also the problematic fragments in all models of the form of Model~\ref{alg:general}. Furthermore, in HPs of the form $(\env;\,\ctrl;\,\plant)^*$, with no distinction between \env and \sys, Theorem~\ref{thm:passenger:controller} can still be used to determine whether the \env assumption is over-constrained. In addition, $\uppsi$ provides insights to aid in the search of a loop invariant and its dependency on the HP variables. 

\section{Results}\label{sec:results}

This section shows how Theorem~\ref{thm:cheating:controller} and Theorem~\ref{thm:passenger:controller} are used to \begin{enumerate*}[label=(\roman*)]
    \item identify that Model~\ref{alg:m2} and Model~\ref{alg:m3} are deceptive for the verification of \ctrl,   
    \item aid in the identification of a candidate loop invariant, and
    \item increase confidence in the fidelity of Model~\ref{alg:m4} where the  errors are corrected.
\end{enumerate*} 
The HPs and the \keyx proofs are available from~\cite{selvaraj2022:models}.

The \dl formula~\eqref{eq:general:dl} with the definitions in Model~\ref{alg:m1} and Model~\ref{alg:m2}, denoted as~$\uptheta$, is proved in~\keyx with the loop invariant $\upzeta_1 \equiv x \le x_c$. Therefore it follows from~\ref{it:invariant:loop} that $\models \upgamma$, where $ \upgamma \equiv \upzeta_1 \to [\env;\,\sys;\,\ctrl;\,\plant]\,\upzeta_1$.
By \thm~\ref{thm:cheating:controller}, $\uprho$ must hold for Model~\ref{alg:m2} to conclude the absence of Problem~\ref{prob:1}. 
The formula
\begin{align}\label{eq:thm:one:result:weakinv}
    \lnot \rhoone \equiv \exists x\ldotp \exists v \ldotp \exists x_c \ldotp \exists x_{c}^+ \ldotp \lnot \Bigl(x \le x_c\, &  \land\, x_c \le x_{c}^+ \to  \nonumber\\
    & \big\langle x_c\coloneqq *;\;?\,(v^2 \le 2\aMinN(x_c-x))\big\rangle\,(x_c = x_{c}^+)\Bigr),
\end{align}
expressed from~\eqref{eq:invariant:implies:environment} for Model~\ref{alg:m2} with $\upzeta(x,v,x_c) \equiv \upzeta_1$ and $R(x_c,x_{c}^+) \equiv x_c \le x_{c}^+$ is proven valid in~\keyx, thereby confirming that Model~\ref{alg:m2} is susceptible to Problem~\ref{prob:1}.  

As~$\models\lnot\rhoone$, it follows that a stronger loop invariant is needed to not verify an exploiting \ctrl. A possible candidate is the \env assumptions themselves, so let $\upzeta_2 \equiv v^2 \le 2\aMinN(x_c - x)$. For this choice of loop invariant, $\rhotwo$ is valid with $\upzeta(x,v,x_c) \equiv \upzeta_2$ and $R(x_c,x_{c}^+) \equiv x_c \le x_{c}^+$. However, $\upgamma$ cannot be proven with~$\upzeta_2$ since the \ctrl actions do not maintain~$\upzeta_2$, as already illustrated in \fig~\ref{fig:cheating:controller}. Hence, the exploiting \ctrl cannot be proven to fulfill \guarantee. These results are summarized in the first three rows of Table~\ref{tab:results}.

\begin{table}[tbp]
    \caption{Summary of validity results for incorrect and correct models}
    \centering
    \begin{tabular}{@{}lllll@{}}
        \toprule
        Model & Loop & Conjuncts & Valid & Reason \\
         & invariant & & & \\
        \midrule
        \ref{alg:m2} & $\upzeta_1$ & - & Yes & Exploiting controller\\
        \ref{alg:m2} & $\upzeta_1$ & $\rhoone$ & No & Invariant not strong enough \\
        \ref{alg:m2} & $\upzeta_2$ & $\rhotwo$ & No & Controller does not fulfill requirement \\
        \ref{alg:m3} & $\upzeta_1$ & - & Yes & Unchallenged controller \\
        \ref{alg:m3} & $\upzeta_1$ & $\lnot \chione$& No & Invariant preserved without controller\\ 
        \ref{alg:m4} & $\upzeta_1$ & - & Yes & \\
        \ref{alg:m4} & $\upzeta_1$ & $\rhoone$ $\land \lnot \chione$ & No & Invariant not strong enough \\
        \ref{alg:m4} & $\upzeta_2$ & $\rhotwo$ $\land \lnot \chitwo$ & Yes & \\
        \bottomrule
    \end{tabular}
    \label{tab:results}
\end{table}

The next two rows of Table~\ref{tab:results} summarize the results of the \dl formula~\eqref{eq:general:dl} with the definitions in Model~\ref{alg:m1} and Model~\ref{alg:m3} which is proved using the loop invariant $\upzeta_1$. Therefore it follows from~\ref{it:invariant:loop} that $\models \upgamma$. By Theorem~\ref{thm:passenger:controller}, $\models \lnot  \upchi$~\eqref{eq:environment:invalidates:invariant} must hold to ensure that \ctrl is indeed verified safe. However the~\dl formula $\chione$~\eqref{eq:thm:two:result:weakinv} with $\upzeta_1$ and \env, \sys, \plant defined by~\eqref{eq:m2:env},~\eqref{eq:m3:sys}, and~\eqref{eq:m1:plant}, respectively, is proven in~\keyx and thus, it follows that Model~\ref{alg:m3} is vulnerable to Problem~\ref{prob:2}.
\begin{nalign}\label{eq:thm:two:result:weakinv}
    \chione \equiv (x \le x_c) \to [\env;\,\sys;\,\plant]\,(x \le x_c)
\end{nalign}

\begin{figure}[!t]
 \removelatexerror
  \begin{HP}[H]
    \DontPrintSemicolon
    \setstretch{0.5}
    \caption{Correct \env, \sys, and \ctrl}
    \label{alg:m4}
    \SetAlgoLined
    \noindent
    \begin{flalign}
        \env \triangleq &\:x_c\coloneqq*;\,?\,\left(x_c-x \ge \frac{v^2}{2\aMinN} \right) & \label{eq:m4:env}
    \end{flalign}
    \begin{flalign}
        \sys \triangleq &\:a_n\coloneqq *;\; ?\left(-\aMinN \le a \le \aMaxN\right) &\label{eq:m4:sys}
    \end{flalign}
    \begin{flalign}\label{eq:m4:ctrl}
             \ctrl \triangleq \opif \, \lnot\safe(x,v,x_c,a) \, \opthen\, a\coloneqq *;\,?\,C(x,v,x_c,a)\, \opfi &&
    \end{flalign} 
    \begin{flalign}\label{eq:m4:safe}
        \safe \triangleq x_c - x \ge v\Tau + \frac{\aMaxN\,\Tau^2}{2} + \frac{\left(v+\aMaxN\,\Tau\right)^2}{2\aMinN} &&
    \end{flalign}
    \begin{flalign}\label{eq:m4:ctrl:safeaction}
        C(x,v,x_c,a) \triangleq a = -\aMinS &&
    \end{flalign}
  \end{HP}\vspace{-0.4cm}
\end{figure}

The last three rows of Table~\ref{tab:results} summarize the results of the \dl formula~\eqref{eq:general:dl} with the definitions in Model~\ref{alg:m1} and Model~\ref{alg:m4}, where all three parts conjuncted together is denoted by~$\upkappa$. Based on the insights about Model~\ref{alg:m2} and Model~\ref{alg:m3} from Table~\ref{tab:results}, Model~\ref{alg:m4} rectifies Problem~\ref{prob:1} and Problem~\ref{prob:2}. Similar to the previous models, the \env assumption~\eqref{eq:m4:env} assigns $x_c$ such that it is possible to brake to stop before the obstacle and \sys ~\eqref{eq:m4:sys} is a black box. Unlike the previous models, the \ctrl test \safe in~\eqref{eq:m4:safe} not only checks whether the worst-case acceleration is safe in the current execution but also checks whether, in doing so \guarantee is fulfilled in the next loop execution. 

The~\dl formula $\upkappa$ is proved in~\keyx using the loop invariant $\upzeta_1 \equiv x \le x_c$. Since $R(x_c,x_{c}^{+}) \equiv x_c \le x_{c}^{+}$ is also applicable for Model~\ref{alg:m4}, it follows from $\models \lnot \rhoone$~\eqref{eq:thm:one:result:weakinv} that $\upzeta_1$ is not sufficiently strong to solve Problem~\ref{prob:1}. The stronger invariant candidate $\upzeta_2 \equiv v^2 \le 2\aMinN(x_c - x)$ is used to prove $\upkappa$ and since $\models \rhotwo$, it is concluded that~$\upzeta_2$ is sufficiently strong to solve Problem~\ref{prob:1} for~Model~\ref{alg:m4}.

Finally, to confirm that Model~\ref{alg:m4} is not susceptible to Problem~\ref{prob:2}, $\uppsi$ from \thm~\ref{thm:passenger:controller} must hold. The \dl formula $\psitwo$~\eqref{eq:thm:two:result:stronginv} is proven in~\keyx:

\begin{nalign}\label{eq:thm:two:result:stronginv}
    \psitwo \equiv \exists x \ldotp \exists v \ldotp   \exists x_c \ldotp \Bigl( \upzeta&(x,v,x_c,\aMinN)\, \land\\ 
    &\langle \env;\,\sys \rangle\, \Bigl(\lnot \upzeta(x,v,x_c,a)
     \land \langle \plant \rangle \lnot \upzeta(x,v,x_c,\aMinN)\Bigr)\Bigr)\, ,
\end{nalign}
where \env, \sys and \plant are as defined in~\eqref{eq:m4:env}, \eqref{eq:m4:sys} and~\eqref{eq:m1:plant} respectively, and the loop invariant $\upzeta_2 \equiv \upzeta(x,v,x_c,\aMinN)$ is a specific instantiation of the \dl formula  $\upzeta(x,v,x_c,a)$ given by:
\begin{nalign}\label{eq:zetaiterate}\nonumber
    \upzeta(x,v,x_c,a) \equiv\; & (v+a\Tau\ge0) \to (v+a\Tau)^2 \le 2\aMinN \left(x_c - x - v\Tau - \frac{a\Tau^2}{2}\right)\,\land \\ \nonumber
                                    & (v+aT < 0) \to v^2 \le 2\aMinN (x_c - x)\ .
\end{nalign}
With this result, it holds that $\models \psitwo$, and therefore it follows from Theorem~\ref{thm:passenger:controller} that $\models \lnot \chitwo$ for the choice of $\upzeta_2$. Thus, it entails that Model~\ref{alg:m4} is bereft of Problem~\ref{prob:1} and Problem~\ref{prob:2}, as summarized in the last row of Table~\ref{tab:results}.

\section{Related Work}

The models considered in this paper are similar to the models used to verify the European Train Controller System~(ETCS)~\cite{platzer2009:etcs}. Though not explicitly stated, the modeling pitfalls are avoided for the ETCS models by the use of an \emph{iterative refinement process} that determines a loop invariant based on a controllability constraint. The process is used to design a correct controller rather than to verify one.

An alternative to guarantee CPS correctness is \emph{runtime validation}~\cite{mitsch2016:modelplex}, where runtime monitors are added to the physical implementation, monitoring whether the system deviates from its model. If it does, correctness is no longer guaranteed, and safe fallbacks are activated. However, for Model~\ref{alg:m2}, the safe fallback would be activated too late since \ctrl had already taken an unsafe action when the violation of the \env assumptions are detected. Furthermore, the safe fallbacks might cause spurious braking for Model~\ref{alg:m3}.

The issue in Model~\ref{alg:m2} is not unique to \dl; the issue manifests itself similarly in reactive synthesis~\cite{bloem2014:assumptions,majumdar2019ecoGR1}. The cause of the issue, in both paradigms, stems from the logical implication from the \env assumptions to the \ctrl actions and requirements. Instead of taking actions to fulfill the consequent, an exploiting \ctrl can invalidate the premise to fulfill the implication. However, Bloem et al.~\cite{bloem2014:assumptions} conclude that none of the existing approaches completely solve the problem and emphasize the need for further research.

Theorems~\ref{thm:cheating:controller} and~\ref{thm:passenger:controller} put conditions on individual components, but these conditions, in the form of the loop invariant, stem from the same global requirement. M\"{u}ller et al.~\cite{muller2018contract:composition} take the other approach and start with separate requirements for each of the components to support the global requirement. The goal of the decomposition is to ease the modeling and verification effort, and not directly to validate the model. However, these methods would likely be beneficial in tandem.

The contributions of this paper give additional constraints, apart from the three implications of the \looprule, that can aid the construction of invariants. This might be useful in automatic invariant inference, which is a field of active research where loop invariants are synthesized. Furia and Meyer~\cite{furia2010infering:invariants} note that the automatic synthesis of invariants based on the implementation (or the model) might be self-fulfilling, and go on to argue that the postconditions and the global requirements must be considered in the invariant synthesis. This paper, however, suggests that, for certain models, it might not be sufficient to consider only the postconditions in the invariant synthesis.

\section{Conclusion}

Modeling errors present a risk of unsound conclusions from provably safe erroneous models, if used in the safety argument of safety-critical systems. This paper formulates and proves conditions in \thm~\ref{thm:cheating:controller} and \thm~\ref{thm:passenger:controller} that, when fulfilled, help identify and avoid two kinds of modeling errors that may result in a faulty controller being proven safe. Furthermore, the formulated conditions aid in finding a loop invariant which is typically necessary to verify the safety of hybrid systems.

Using a running example of an automated driving controller, the problematic cases are shown to exist in practical CPS designs. The formulated conditions are then applied to the erroneous models to show that the errors are captured. Finally, the errors are rectified to obtain a correct model, which is then proved using a loop invariant that satisfies the formulated conditions, thus ensuring absence of the two modeling errors discussed in this paper.   

A natural extension of this work will be to investigate also other kinds of modeling errors that might arise in the verification of complex CPS designs. Moreover, it would also be beneficial to investigate the connection between loop invariants and differential invariants, which are used to prove properties about hybrid systems with differential equations without their closed-form solutions. 

\newpage

%
%
\bibliographystyle{splncs04}
\bibliography{mybib}

\end{document}